\newtheorem{theorem}{Theorem}
\newtheoremstyle{mydefinition}% name of the style to be used
{}% measure of space to leave above the theorem. E.g.: 3pt
{}% measure of space to leave below the theorem. E.g.: 3pt
{}% name of font to use in the body of the theorem
{0pt}% measure of space to indent
{\bfseries}% name of head font
{.}% punctuation between head and body
{ }% space after theorem head; " " = normal interword space
{\thmname{#1}\thmnumber{ #2}: \thmnote{#3}}
\theoremstyle{mydefinition}
\newtheoremstyle{myremark}% name of the style to be used
{}% measure of space to leave above the theorem. E.g.: 3pt
{}% measure of space to leave below the theorem. E.g.: 3pt
{}% name of font to use in the body of the theorem
{0pt}% measure of space to indent
{\bfseries}% name of head font
{.}% punctuation between head and body
{ }% space after theorem head; " " = normal interword space
{\thmname{#1}\thmnumber{ #2}: \thmnote{#3}}
\theoremstyle{myremark}
\newtheoremstyle{remarkshort}% name of the style to be used
{}% measure of space to leave above the theorem. E.g.: 3pt
{}% measure of space to leave below the theorem. E.g.: 3pt
{}% name of font to use in the body of the theorem
{0pt}% measure of space to indent
{\bfseries}% name of head font
{.}% punctuation between head and body
{ }% space after theorem head; " " = normal interword space
{\thmname{#1}\thmnumber{ #2}}
\theoremstyle{remarkshort}
\theoremstyle{remarkshort}
\let\originalleft\left
\let\originalright\right
\renewcommand{\left}{\mathopen{}\mathclose\bgroup\originalleft}
\renewcommand{\right}{\aftergroup\egroup\originalright}
\newcommand{\set}[1]{{\mathcal{#1}}} % sets
\newcommand{\mat}[1]{{\mathbf{#1}}} % matrices
\renewcommand{\vec}[1]{{\mathbf{#1}}} % vectors
\newcommand{\matfd}[1]{\bm{\mathsf{#1}}} % matrices in frequency domain
\newcommand{\vecfd}[1]{{\bm{\mathsf{#1}}}} % vectors in frequency domain
\newcommand{\parens}[1]{{\left(#1\right)}\xspace}
\newcommand{\brackets}[1]{{\left[#1\right]}\xspace}
\newcommand{\braces}[1]{{\left\{#1\right\}}\xspace}
\newcommand{\bars}[1]{{\left\vert#1\right\vert}\xspace}
\newcommand{\doublebars}[1]{{\left\Vert#1\right\Vert}\xspace}
\renewcommand{\j}{{\mathrm{j}}}
\newcommand{\e}{{\mathrm{e}}}
\newcommand{\complex}{{\mathbb{C}}\xspace}
\newcommand{\epow}[1]{{\e^{#1}}}
\newcommand{\expp}[1]{\mathrm{exp}\parens{#1}}
\newcommand{\logtwo}[1]{\ensuremath{\mathrm{log}_{2}\parens{#1}}}
\newcommand{\cosp}[1]{{\mathrm{cos}\parens{#1}}}
\newcommand{\card}[1]{\bars{#1}}
\newcommand{\setcomplex}{{\complex}}
\newcommand{\setvector}[2]{#1^{#2 \times 1}}
\newcommand{\setvectorcomplex}[1]{\setvector{\setcomplex}{#1}}
\newcommand{\setmatrix}[3]{#1^{#2 \times #3}}
\newcommand{\setmatrixcomplex}[2]{\setmatrix{\setcomplex}{#1}{#2}}
\newcommand{\ctrans}{^{{*}}}
\newcommand{\trace}[1]{{\mathrm{tr}\parens{#1}}}
\newcommand{\entry}[2]{{\brackets{#1}_{#2}}}
\newcommand{\diag}[1]{{\mathrm{diag}\parens{#1}}}
\newcommand{\normp}[2]{\doublebars{#2}_{#1}}
\newcommand{\normtwo}[1]{\normp{2}{#1}}
\newcommand{\normfro}[1]{\normp{\mathrm{F}}{#1}}
\newcommand{\sigmatx}{\sigma_{\mathrm{tx}}}
\newcommand{\sigmarx}{\sigma_{\mathrm{rx}}}
\DeclareMathOperator*{\argmin}{\mathrm{argmin}}
\newcommand{\subjectto}{\mathrm{s.t.~}}
\newcommand{\opt}{^{\star}}
\newcommand{\project}[2]{\ensuremath{\Pi_{#1}\parens{#2}}\xspace}
\newcommand{\fc}{f_\mathrm{c}}
\newcommand{\powertx}{P_{\mathrm{tx}}}
\newcommand{\powernoise}{P_{\mathrm{n}}}
\newcommand{\snr}{{\mathsf{SNR}}}
\newcommand{\sinr}{{\mathsf{SINR}}}
\newcommand{\inr}{{\mathsf{INR}}}
\newcommand{\snrtxbar}{\overline{\snr}_{\labeltx}}
\newcommand{\snrrxbar}{\overline{\snr}_{\labelrx}}
\newcommand{\inrrxbar}{\overline{\inr}_{\labelrx}}
\newcommand{\sigmatxsq}{\sigmatx^2}
\newcommand{\sigmarxsq}{\sigmarx^2}
\newcommand{\bitsamp}{b_{\mathrm{amp}}}
\newcommand{\bitsphase}{b_{\mathrm{phs}}}
\newcommand{\mAtx}{\mA_{\mathrm{tx}}}
\newcommand{\mArx}{\mA_{\mathrm{rx}}}
\newcommand{\Nt}{N_\mathrm{t}} % number of antennas at Tx
\newcommand{\Nr}{N_\mathrm{r}} % number of antennas at Rx
\newcommand{\precbbar}{{\bar{\mathcal{F}}}\xspace}
\newcommand{\comcbbar}{{\bar{\mathcal{W}}}\xspace}
\newcommand{\labeltx}{\mathrm{tx}}
\newcommand{\labelrx}{\mathrm{rx}}
\newcommand{\snrtx}{\snr_{\labeltx}}
\newcommand{\snrrx}{\snr_{\labelrx}}
\newcommand{\sinrtx}{\sinr_{\labeltx}}
\newcommand{\sinrrx}{\sinr_{\labelrx}}
\newcommand{\inrtx}{\inr_{\labeltx}}
\newcommand{\inrrx}{\inr_{\labelrx}}
\newcommand{\thetatx}{\ensuremath{\theta_{\mathrm{tx}}}\xspace}
\newcommand{\thetarx}{\ensuremath{\theta_{\mathrm{rx}}}\xspace}
\newcommand{\idx}[1]{^{\parens{#1}}}
\def\va{{\vec{a}}}
\def\vf{{\vec{f}}}
\def\vw{{\vec{w}}}
\def\vx{{\vec{x}}}
\def\vone{{\vec{1}}}
\def\vfh{{\vecfd{h}}}
\def\mA{{\mat{A}}}
\def\mF{{\mat{F}}}
\def\mQ{{\mat{Q}}}
\def\mW{{\mat{W}}}
\def\mLambda{{\mat{\Lambda}}}
\def\mfH{{\matfd{H}}}
\def\sA{{\set{A}}}
\def\sF{{\set{F}}}
\def\sP{{\set{P}}}
\def\sW{{\set{W}}}
\newcommand{\vatx}{\va_{\labeltx}\xspace}
\newcommand{\varx}{\va_{\labelrx}\xspace}
\newcommand{\Gtx}{G_{\labeltx}\xspace}
\newcommand{\Grx}{G_{\labelrx}\xspace}
\newcommand{\Mtx}{M_{\labeltx}\xspace}
\newcommand{\Mrx}{M_{\labelrx}\xspace}
\newcommand{\vfhtx}{\vecfd{h}_{\labeltx}\xspace}
\newcommand{\vfhrx}{\vecfd{h}_{\labelrx}\xspace}
\newcommand\eqa{\stackrel{\mathclap{\textrm{(a)}}}{\mathrm{=}}}
\newcommand\eqb{\stackrel{\mathclap{\textrm{(b)}}}{\mathrm{=}}}
\newcommand\eqc{\stackrel{\mathclap{\textrm{(c)}}}{\mathrm{=}}}
\newcommand\eqd{\stackrel{\mathclap{\textrm{(d)}}}{\mathrm{=}}}
\newcommand{\lonestar}{\textsc{LoneSTAR}\xspace}
\newacronym{snr}{SNR}{signal-to-noise ratio}
\newacronym{sinr}{SINR}{signal-to-interference-plus-noise ratio}
\newacronym{inr}{INR}{interference-to-noise ratio}
\newacronym{sir}{SIR}{signal-to-interference ratio}
\newacronym{sqr}{SQR}{signal-to-quantization-noise ratio}
\newacronym{sqnr}{SQNR}{signal-to-quantization-plus-noise ratio}
\newacronym{ian}{IAN}{interference as noise}
\newacronym{ber}{BER}{bit error rate}
\newacronym{pn}{PN}{pseudorandom noise}
\newacronym{bfsk}{BFSK}{binary frequency shift keying}
\newacronym{fh}{FH}{frequency-hopped}
\newacronym{fh-bfsk}{FH-BFSK}{frequency-hopped binary frequency shift keying}
\newacronym{crc}{CRC}{cyclic redundancy check}
\newacronym{isi}{ISI}{intersymbol interference}
\newacronym{dsss}{DSSS}{direct-sequence spread spectrum}
\newacronym{ofdm}{OFDM}{orthogonal frequency-division multiplexing}
\newacronym{ofdma}{OFDMA}{orthogonal frequency-division multiple access}
\newacronym{sdr}{SDR}{software-defined radio}
\newacronym{tx}{TX}{transmitter}
\newacronym{rx}{RX}{receiver}
\newacronym{fdd}{FDD}{frequency-division duplexing}
\newacronym{tdd}{TDD}{time-division duplexing}
\newacronym{fdma}{FDMA}{frequency-division multiple access}
\newacronym{tdma}{TDMA}{time-division multiple access}
\newacronym{sdma}{SDMA}{space-division multiple access}
\newacronym[plural=MPCs]{mpc}{MPC}{multipath component}
\newacronym{mui}{MUI}{multi-user interference}
\newacronym{lsb}{LSB}{least significant bit}
\newacronym{jcas}{JCAS}{joint communication and sensing}
\newacronym{qam}{QAM}{quadrature amplitude modulation}
\newacronym{mqam}{MQAM}{M-ary quadrature amplitude modulation}
\newacronym{ls}{LS}{least-squares}
\newacronym{lms}{LMS}{least mean squares}
\newacronym{rls}{RLS}{recursive least-squares}
\newacronym{rzf}{RZF}{regularized zero-forcing}
\newacronym{mmse}{MMSE}{minimum mean square error}
\newacronym{lmmse}{LMMSE}{linear minimum mean square error}
\newacronym{mse}{MSE}{mean square error}
\newacronym{fft}{FFT}{fast Fourier transform}
\newacronym{dft}{DFT}{discrete Fourier transform}
\newacronym{dtft}{DTFT}{discrete-time Fourier transform}
\newacronym{ctft}{CTFT}{continuous-time Fourier transform}
\newacronym{ml}{ML}{machine learning}
\newacronym[plural=NNs]{nn}{NN}{neural network}
\newacronym[plural=RNNs]{rnn}{RNN}{recurrent neural network}
\newacronym[plural=ADCs]{adc}{ADC}{analog-to-digital converter}
\newacronym[plural=DACs]{dac}{DAC}{digital-to-analog converter}
\newacronym[plural=FPGAs]{fpga}{FPGA}{field-programmable gate array}
\newacronym{evm}{EVM}{error vector magnitude}
\newacronym{enob}{ENOB}{effective number of bits}
\newacronym{zf}{ZF}{zero-forcing}
\newacronym{rv}{r.v.}{random variable}
\newacronym{omp}{OMP}{orthogonal matching pursuit}
\newacronym{svd}{SVD}{singular value decomposition}
\newacronym{sdp}{SDP}{semidefinite programming}
\newacronym{psd}{PSD}{positive semidefinite}
\newacronym{nsd}{NSD}{negative semidefinite}
\newacronym{ks}{K-S}{Kolmogorov-Smirnov}
\newacronym{mad}{MAD}{median absolute deviation around the median}
\newacronym{agc}{AGC}{automatic gain control}
\newacronym{rf}{RF}{radio frequency}
\newacronym{if}{IF}{intermediate frequency}
\newacronym{los}{LOS}{line-of-sight}
\newacronym{nlos}{NLOS}{non-line-of-sight}
\newacronym{ple}{PLE}{path loss exponent}
\newacronym[plural=dB,firstplural=decibels (dB)]{db}{dB}{decibel}
\newacronym[plural=dBm,firstplural=decibel milliwatts (dBm)]{dbm}{dBm}{decibel milliwatts}
\newacronym{pa}{PA}{power amplifier}
\newacronym{lna}{LNA}{low noise amplifier}
\newacronym{vga}{VGA}{variable gain amplifier}
\newacronym{cw}{CW}{continuous wave}
\newacronym{papr}{PAPR}{peak-to-average power ratio}
\newacronym{usrp}{USRP}{Universal Software Radio Peripheral}
\newacronym{irr}{IRR}{image rejection ratio}
\newacronym{lo}{LO}{local oscillator}
\newacronym{vm}{VM}{vector modulator}
\newacronym{mmwave}{mmWave}{millimeter-wave}
\newacronym{eirp}{EIRP}{effective isotropic radiated power}
\newacronym{rsrp}{RSRP}{reference signal received power}
\newacronym{csma}{CSMA}{carrier-sense multiple access}
\newacronym{csmaca}{CSMA/CA}{carrier-sense multiple access with collision avoidance}
\newacronym{csmacd}{CSMA/CD}{carrier-sense multiple access with collision detection}
\newacronym{mac}{MAC}{medium access control}
\newacronym{phy}{PHY}{physical layer}
\newacronym{4g}{4G}{fourth generation}
\newacronym{lte}{LTE}{Long-Term Evolution}
\newacronym{4glte}{4G LTE}{\gls{4g} \gls{lte}}
\newacronym{5g}{5G}{fifth generation}
\newacronym{nr}{NR}{New Radio}
\newacronym{5gnr}{5G NR}{5G New Radio}
\newacronym{ieee}{IEEE}{Institute of Electrical and Electronics Engineers}
\newacronym{wifi}{Wi-Fi}{IEEE 802.11}
\newacronym{lan}{LAN}{local area network}
\newacronym{wlan}{WLAN}{wireless local area network}
\newacronym[plural=BSs]{bs}{BS}{base station}
\newacronym[plural=SBSs]{sbs}{SBS}{small-cell base station}
\newacronym[plural=FD-SBSs]{fdsbs}{FD-SBS}{\gls{fd}-enabled \gls{sbs}}
\newacronym[plural=MBSs]{mbs}{MBS}{macrocell base station}
\newacronym[plural=UEs]{ue}{UE}{user equipment}
\newacronym{ul}{UL}{uplink}
\newacronym{dl}{DL}{downlink}
\newacronym{qos}{QoS}{Quality of Service}
\newacronym{fcc}{FCC}{Federal Communications Commission}
\newacronym{iab}{IAB}{integrated access and backhaul}
\newacronym{fab}{FAB}{fixed access and backhaul}
\newacronym{hetnet}{HetNet}{heterogeneous network}
\newacronym{siso}{SISO}{single-input single-output}
\newacronym{mimo}{MIMO}{multiple-input multiple-output}
\newacronym{sumimo}{SU-MIMO}{single-user \gls{mimo}}
\newacronym{mumimo}{MU-MIMO}{multi-user \gls{mimo}}
\newacronym{bf}{BF}{beamforming}
\newacronym{ca}{CA}{constant amplitude}
\newacronym{ula}{ULA}{uniform linear array}
\newacronym{upa}{UPA}{uniform planar array}
\newacronym[\glslongpluralkey={angles of arrival}]{aoa}{AoA}{angle of arrival}
\newacronym[\glslongpluralkey={angles of departure}]{aod}{AoD}{angle of departure}
\newacronym{dof}{DoF}{degrees of freedom}
\newacronym{csi}{CSI}{channel state information}
\newacronym{csit}{CSIT}{\gls{csi} at the transmitter}
\newacronym{csir}{CSIR}{\gls{csi} at the receiver}
\newacronym{cs}{CS}{compressed sensing}
\newacronym{fd}{FD}{in-band full-duplex}
\newacronym{hd}{HD}{half-duplex}
\newacronym{si}{SI}{self-interference}
\newacronym{sic}{SIC}{self-interference cancellation}
\newacronym{soi}{SoI}{signal of interest}
\newacronym{asic}{A-SIC}{analog \acrlong{sic}}
\newacronym{dsic}{D-SIC}{digital \gls{sic}}
\newacronym{star}{STAR}{simultaneous transmit and receive}
\newacronym{warp}{WARP}{Wireless Open-Access Research Platform}
\newacronym{bfc}{BFC}{beamforming cancellation}
\newacronym{ipi}{IPI}{inter-panel-interference}
\newacronym{ipic}{IPIC}{inter-panel-interference cancellation}
\newacronym{qcqp}{QCQP}{quadratically-constrained quadratic programming}
\newacronym{pdf}{PDF}{probability density function}
\newacronym{cdf}{CDF}{cumulative density function}
\newacronym{iid}{i.i.d.}{independently and identically distributed}
\newacronym{elf}{ELF}{extremely low frequency}
\newacronym{slf}{SLF}{super low frequency}
\newacronym{ulf}{ULF}{ultra low frequency}
\newacronym{vlf}{VLF}{very low frequency}
\newacronym{lf}{LF}{low frequency}
\newacronym{mf}{MF}{medium frequency}
\newacronym{hf}{HF}{high frequency}
\newacronym{vhf}{VHF}{very high frequency}
\newacronym{uhf}{UHF}{ultra high frequency}
\newacronym{shf}{SHF}{super high frequency}
\newacronym{ehf}{EHF}{extremely high frequency}
\newacronym{thf}{THF}{tremendously high frequency}
\newacronym{wncg}{WNCG}{Wireless Networking and Communications Group}
\newacronym{linc}{LINC}{Laboratory of Informatics, Networks, and Communications}
\newacronym{ut}{UT Austin}{The University of Texas at Austin}
\newacronym{uiuc}{UIUC}{University of Illinois at Urbana-Champaign}
\newacronym{usc}{USC}{University of Southern California}
\newacronym{mit}{MIT}{Massachusetts Institute of Technology}
\newacronym{berkeley}{UC Berkeley}{University of California, Berkeley}
\newacronym{osu}{OSU}{Ohio State University}
\newcommand{\mmwave}{\gls{mmwave}\xspace}
\newcommand{\mimo}{\gls{mimo}\xspace}
\newcommand{\sic}{\acrlong{sic}\xspace}
\newcommand{\bs}{\gls{bs}\xspace}
\newcommand{\bss}{\glspl{bs}\xspace}
\newcommand{\gsnr}{\gls{snr}\xspace}
\newcommand{\figref}[1]{\figurename~\ref{#1}}
\newcommand{\thmref}[1]{Theorem~\ref{#1}}
\pgfplotsset{compat=newest}
\pgfplotsset{plot coordinates/math parser=false}
\definecolor{uclablue}{RGB}{39,116,174}
\definecolor{uclabluedarkest}{RGB}{0,59,92}
\definecolor{uclabluedarker}{RGB}{0,85,135}
\definecolor{uclabluelighter}{RGB}{139,184,232}
\definecolor{uclabluelightest}{RGB}{218,235,254}
\definecolor{uclagold}{RGB}{255,209,0}
\definecolor{uclagolddarker}{RGB}{255,199,44}
\definecolor{uclagolddarkest}{RGB}{255,184,28}
\definecolor{uclaviolet}{RGB}{130,55,165}
\definecolor{uclamagenta}{RGB}{255,0,165}
\begin{document}

%
% paper title
% Titles are generally capitalized except for words such as a, an, and, as,
% at, but, by, for, in, nor, of, on, or, the, to and up, which are usually
% not capitalized unless they are the first or last word of the title.
% Linebreaks \\ can be used within to get better formatting as desired.
% Do not put math or special symbols in the title.
\title{Analog Beamforming Codebooks for\\Wideband Full-Duplex Millimeter-Wave Systems}

%
%
% author names and IEEE memberships
% note positions of commas and nonbreaking spaces ( ~ ) LaTeX will not break
% a structure at a ~ so this keeps an author's name from being broken across
% two lines.
% use \thanks{} to gain access to the first footnote area
% a separate \thanks must be used for each paragraph as LaTeX2e's \thanks
% was not built to handle multiple paragraphs
%

\author{
\IEEEauthorblockN{Sungho Cho and Ian P.~Roberts}%
\IEEEauthorblockA{Wireless Lab, Department of Electrical and Computer Engineering, University of California, Los Angeles (UCLA)}%
\IEEEauthorblockA{Email: \{shcho1304, ianroberts\}@ucla.edu}
}

\maketitle

\begin{abstract}
In full-duplex \mmwave systems, the effects of beam squint and the frequency-selectivity of self-interference exacerbate over wide bandwidths.
This complicates the use of beamforming to cancel self-interference when communicating over bandwidths on the order of gigahertz. 
In this work, we present the first analog beamforming codebooks tailored to wideband full-duplex \mmwave systems, designed to both combat beam squint and cancel frequency-selective self-interference.
Our proposed design constructs such codebooks by minimizing self-interference across the entire band of interest while constraining the coverage provided by these codebooks across that same band.
Simulation results using computational electromagnetics to model self-interference suggest that a full-duplex 60 GHz system with our design enjoys lower self-interference and delivers better coverage across bandwidths as wide as 6 GHz, when compared to similar codebook designs that ignore beam squint and/or frequency-selectivity. 
This allows our design to sustain higher SINRs and spectral efficiencies across wide bandwidths, unlocking the potentials of wideband full-duplex \mmwave systems. 
\end{abstract}

\glsresetall

\section{Introduction} \label{sec:introduction}
\Gls{mmwave} \bss rely on analog beamforming with dense antenna arrays to transmit and receive signals with high gain in the direction of users they wish to serve \cite{heath_overview_2016}. % in desired directions.
Unlike their low-frequency counterparts, \mmwave systems typically employ beam alignment procedures, where beams are selected from some predefined beam codebook following a series of received power measurements between a \bs and its users \cite{heath_overview_2016, ethan_beam}.
This process circumvents the need to estimate each user's \mimo channel, which is both challenging and resource-expensive, given the channels can be high-dimensional.

In pursuit of enhancing \mmwave networks, there has emerged the potential for a \mmwave \bs to operate in an in-band full-duplex fashion by crafting its transmit and receive beams such that they couple low self-interference while delivering high gain toward a downlink user and uplink user \cite{smida_fd_6g_jsac_2023,roberts_wcm}.
A variety of work has developed such full-duplex beam designs, taking a multitude of approaches ranging from user-specific analog/hybrid beamforming \cite{prelcic_2019_hybrid,roberts_bflrdr,lopez_analog_2022}, beam codebook designs \cite{roberts_lonestar,Bayrak_code_2023}, beam selection \cite{roberts_steer}, and beam refinement \cite{roberts_steerp}. 
These solutions have all highlighted the potential of realizing full-duplex \mmwave \bss without conventional analog or digital \sic \cite{smida_fd_phy_2024}.

The principal motivation for operating at \mmwave carrier frequencies is to leverage wide bandwidths on the order of hundreds of MHz or even several GHz to deliver high data rates.
Until now, however, existing beamforming solutions for full-duplex \mmwave \bss have been almost exclusively narrowband in their design, and thus ignore two noteworthy factors faced by wideband full-duplex \mmwave systems.
The first of these factors is so-called \textit{beam squint} under analog beamforming, where beamforming gain varies with frequency across wide bandwidths due to the array response itself being a function of frequency \cite{phased_array_handbook}.
Ignoring beam squint can thus lead to poor gain across the operating bandwidth, most severely at its edges \cite{phased_array_handbook}.
The second important factor often overlooked is the frequency-selectivity of the self-interference channel, which naturally worsens over wide bandwidths \cite{smida_fd_phy_2024,lee_2015,rf_wide_si_cancel_phased_2022,Joint_analog_dig_wide_cancel_2022}.
Because of this, transmit and receive beams designed for full-duplex using existing narrowband methods do not necessarily provide robust self-interference cancellation across the entire operating bandwidth but rather only at and around the carrier frequency.
To the best of our knowledge, there is no existing work which designs analog beamformers specifically for wideband full-duplex \mmwave systems, taking into account both beam squint and wideband self-interference cancellation.

In this work, rather than solely focus on designing a single pair of transmit and receive beams, we design entire beam \textit{codebooks} for wideband full-duplex \mmwave systems.
While this is more challenging, we set our sights on such codebooks as this will allow real-world full-duplex \mmwave systems (such as those in 5G/6G) to accommodate beam alignment procedures, an approach also taken in \cite{roberts_lonestar} and similarly in \cite{Bayrak_code_2023}.
We design our proposed codebooks for wideband full-duplex \mmwave systems using convex optimization to ensure that the beam gain delivered by the codebooks is constrained to a certain quality across frequency while minimizing self-interference over the entire bandwidth. 
As a result, our proposed full-duplex codebooks are capable of offering higher spectral efficiencies over wide bandwidths than existing narrowband codebooks.

\begin{figure}
    \centering%
    \includegraphics[width=\textwidth,height=0.29\textheight,keepaspectratio]{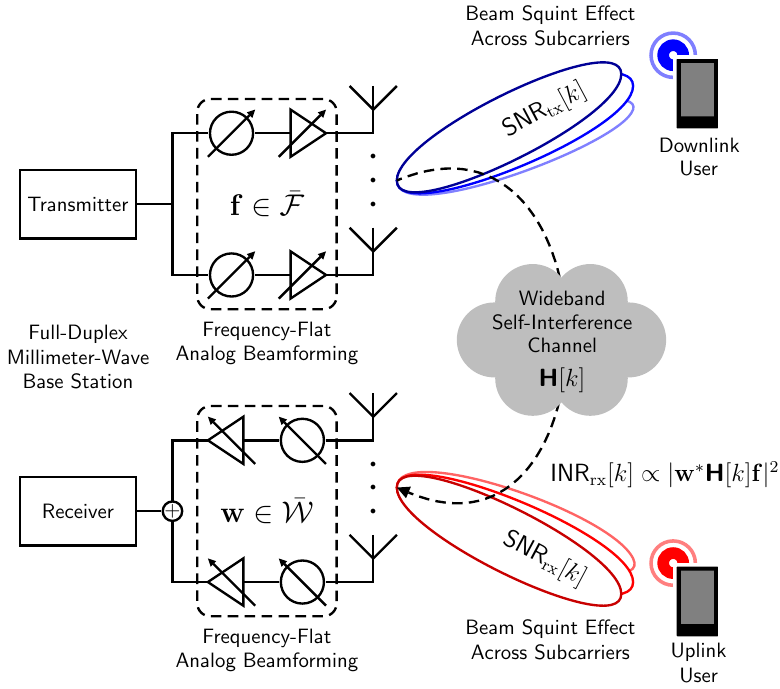}%
    \caption{A wideband full-duplex \mmwave \bs serves downlink and uplink users at the same time and frequency. The self-interference channel and the delivered SNR varies across OFDM subcarriers (indexed by $k$), the latter due in part to the beam squint effect seen with analog beamforming.}
    \label{fig:system}%
\end{figure}

\section{System Model} \label{sec:system-model}
This work considers a wideband \mmwave \bs serving a downlink user and an uplink user at the same time and same frequency (i.e., in an in-band full-duplex fashion).
The \bs operates at a carrier frequency $\fc$ on the order of 30--100~GHz and over a bandwidth $B$ spanning hundreds of MHz or multiple GHz.
Each user is assumed to have a single omnidirectional antenna, but this is not a necessary assumption.
To serve the users, the \bs is equipped with separate transmit and receive antenna arrays, containing $\Nt$ and $\Nr$ antennas, respectively.
Let $\vatx(\theta,f) \in \setvectorcomplex{\Nt}$ and $\varx(\theta,f) \in \setvectorcomplex{\Nr}$ be the array response vectors of the transmit and receive antenna arrays in some direction $\theta$ and at some frequency $f$.
For instance, when $\theta$ is comprised of azimuth $\vartheta$ and elevation $\varphi$ components, the response of the $i$-th element of the transmit array at frequency $f$ is of the form
\begin{align}
    [\vatx(\theta, f)]_i 
    = \expp{\j \, 2\pi \, \frac{f}{\fc} \, \Phi_i(\vartheta,\varphi)},
\end{align}
with $\Phi_i(\vartheta,\varphi) = x_i \sin \vartheta \cos \varphi + y_i \cos\vartheta \cos \varphi + z_i \sin \varphi$, where $(x_i,y_i,z_i)$ is the Cartesian coordinate of the $i$-th transmit antenna. 
In the context of this work, it is important to recognize that the responses of the transmit and receive  arrays depend on frequency~$f$; this leads to the well-known beam squint effect in wideband beamforming systems \cite{phased_array_handbook}.

The transmit and receive arrays are steered electronically using beamforming weights $\vf\in\setvectorcomplex{\Nt}$ and $\vw\in\setvectorcomplex{\Nr}$. 
In the majority of practical \mmwave phased arrays, analog beamforming is employed, and we thus assume digitally controlled phase shifters and stepped attenuators are used to physically realize desired transmit and receive beamforming weights $\vf$ and $\vw$ in hardware.
Let $\sP$ and $\sA$ be the discrete sets of physically realizable phase and amplitude settings at any given antenna, respectively, where $\card{\sP} = 2^{\bitsphase}$ and $\card{\sA} = 2^{\bitsamp}$, with $\bitsphase$ and $\bitsamp$ the resolution (in bits) of each phase shifter and attenuator. 
With this, the sets of all physically realizable transmit beams $\sF$ and receive beams $\sW$ are discrete sets defined as 
\begin{align}
\sF &= \braces{\vf : \entry{\vf}{m} = A \epow{\j\theta}, \theta \in \sP, A \in \sA, m = 1,\dots,\Nt} \\
\sW &= \braces{\vw : \entry{\vw}{n} = A \epow{\j\theta}, \theta \in \sP, A \in \sA, n = 1,\dots,\Nr},\! 
\end{align}
and it follows that $\vf \in \sF \subset \setvectorcomplex{\Nt}$ and $\vw \in \sW \subset \setvectorcomplex{\Nr}$.
Employing attenuators, each beamforming weight must not exceed unit magnitude, and thus $\normtwo{\vf}^2 \leq \Nt$ and $\normtwo{\vw}^2 \leq \Nr$. 

We assume that downlink and uplink employ \gls{ofdm} with $K$ subcarriers; given our focus is full-duplex operation, we assume downlink and uplink overlap on all $K$ subcarriers.
Let $\powertx^{\mathsf{BS}}$ and $\powertx^{\mathsf{UE}}$ be the total transmit powers of the \bs and uplink user, which we assume are allocated uniformly across subcarriers for simplicity, though this is not a necessary assumption.
Let $\powernoise^{\mathsf{BS}}$ and $\powernoise^{\mathsf{UE}}$ be the integrated noise powers of the \bs and downlink user, with the noise power spectral densities assumed flat across subcarriers.

Let us denote the downlink channel on the $k$-th subcarrier by $\vfhtx[k] \in \setvectorcomplex{\Nt}$ and that of the uplink by $\vfhrx[k] \in \setvectorcomplex{\Nr}$.
We normalize the downlink and uplink channel energies as $\frac{1}{K} \sum_{k=1}^K \normtwo{\vfhtx[k]}^2 = \Nt$ and $\frac{1}{K} \sum_{k=1}^K \normtwo{\vfhrx[k]}^2 = \Nr$, abstracting our their inverse path losses by $\Gtx$ and $\Grx$, respectively.
By operating in a full-duplex fashion, the \bs sees on the $k$-th subcarrier a self-interference channel denoted by $\mfH[k] \in \setmatrixcomplex{\Nr}{\Nt}$.
We normalize the self-interference channel as $\frac{1}{K} \sum_{k=1}^K \normfro{\mfH[k]}^2 = \Nt\Nr$, and account for its inverse path loss by $G$.
Putting this all together, the \glspl{sinr} on the $k$-th subcarrier of the downlink and uplink can be expressed as
\begin{align}
    \sinrtx[k] = \frac{\snrtx[k]}{1 + \inrtx[k]}, \\
    \sinrrx[k] = \frac{\snrrx[k]}{1 + \inrrx[k]},
\end{align}
with the downlink and uplink \glspl{snr} on the $k$-th subcarrier defined as
\begin{align}
    \snrtx[k] &= \frac{\powertx^{\mathsf{BS}} \, \Gtx \, \bars{\vfh_{\mathrm{tx}}^*[k] \vf}^2}{\Nt \, \powernoise^{\mathsf{UE}}} \label{eq:snrtx}, \\
    \snrrx[k] &= \frac{\powertx^{\mathsf{UE}} \, \Grx \, \bars{\vfh_{\mathrm{rx}}\ctrans[k] \vw}^2}{\normtwo{\vw}^2 \, \powernoise^{\mathsf{BS}}}, \label{eq:snrrx}
\end{align}
where dividing by $\Nt$ in \eqref{eq:snrtx} accounts for signal splitting in the transmit array and by $\normtwo{\vw}^2$ in \eqref{eq:snrrx} accounts for noise combining in the receive array. 

Though not explicitly formulated, the \gls{inr} on the downlink, denoted by $\inrtx[k]$, is due to cross-link interference between the single-antenna users. 
More relevant in this work is the \gls{inr} on the uplink, which stems from the transmit and receive beams coupling across the self-interference channel.
This uplink \gls{inr} on the $k$-th subcarrier can be expressed as
\begin{align} \label{inrrx}
\inrrx[k] = \frac{\powertx^{\mathsf{BS}} \, G \, \bars{\vw\ctrans \mfH[k] \vf}^2}{\Nt \, \normtwo{\vw}^2 \, \powernoise^{\mathsf{BS}}}.
\end{align}
It will be useful to denote the upper bounds on the average \gls{snr} across subcarriers on the downlink and uplink by $\snrtxbar$ and $\snrrxbar$, which can be written explicitly as
\begin{align}
    \snrtxbar = \frac{\powertx^{\mathsf{BS}} \, \Gtx \, \Nt}{\powernoise^{\mathsf{UE}}} \geq \frac{1}{K}\sum_{k=1}^K\snrtx[k], \\
    \snrrxbar = \frac{\powertx^{\mathsf{UE}} \, \Grx \, \Nr}{\powernoise^{\mathsf{BS}}} \geq \frac{1}{K}\sum_{k=1}^K\snrrx[k].
\end{align}
These upper bounds are a consequence of the normalizations on the downlink and uplink channels.
Likewise, the upper bound on the average of $\inrrx[k]$ can be defined as
\begin{align}
    \inrrxbar = \frac{\powertx^{\mathsf{BS}} \, G \, \Nt \, \Nr}{\powernoise^{\mathsf{BS}}} \geq \frac{1}{K}\sum_{k=1}^K \inrrx[k].
\end{align}
Notice that $\snrtxbar$, $\snrrxbar$, and $\inrrxbar$ only depend on system parameters and thus represent the inherent link quality on the downlink and uplink and severity of self-interference.
Treating interference as noise, the achievable sum spectral efficiency on the $k$-th subcarrier of the downlink and uplink is
\begin{align}
    R[k] &= \logtwo{1 + \sinrtx[k]} + \logtwo{1 + \sinrrx[k]}.
\end{align}
In the next section, we aim to maximize the sum spectral efficiency across subcarriers $\frac{1}{K} \sum_{k=1}^K R[k]$.

\section{Wideband Full-Duplex Beam Codebook Design} \label{sec:contribution}
Practical \mmwave wireless systems, such as those in 5G and WiGig, rely on beam alignment procedures to close the link between a \bs and the users it intends to serve \cite{heath_overview_2016,ethan_beam}.
These procedures involve over-the-air measurements with candidate/probing beams, followed by beam selection from some predefined codebook of serving beams for data transmission/reception.
This process circumvents the need for high-dimensional \gls{mimo} channel acquisition, which is resource-expensive and challenging in practical \mmwave systems.

Under the assumption that the considered full-duplex \mmwave system will employ such beam alignment procedures, we narrow the focus of this work to designing predefined codebooks of transmit and receive beams for downlink and uplink service.
Let $\precbbar$ and $\comcbbar$ denote codebooks of $\Mtx$ transmit and $\Mrx$ receive beams as
\begin{align}
\precbbar & = \braces{\vf_1, \vf_2, \dots, \vf_{\Mtx}} \subset \sF \subset \setvectorcomplex{\Nt} \\
\comcbbar & = \braces{\vw_1, \vw_2, \dots, \vw_{\Mrx}} \subset \sW \subset \setvectorcomplex{\Nr}.
\end{align}
We assume that the $i$-th transmit beam $\vf_i$ and the $j$-th receive beam $\vw_j$ are intended to steer with high gain toward some directions $\thetatx\idx{i}$ and $\thetarx\idx{j}$, respectively.
The coverage regions spanned by $\precbbar$ and $\comcbbar$ can then be loosely defined as $\{\thetatx\idx{1},\dots,\thetatx\idx{\Mtx}\}$ and $\{\thetarx\idx{1},\dots,\thetarx\idx{\Mrx}\}$.
The gain delivered by the $i$-th transmit beam $\vf_i$ in its serving direction $\thetatx\idx{i}$ at frequency $f$ can be expressed as $|\vf_i\ctrans \vatx(\thetatx\idx{i},f)|^2$, for instance.
Thus, the gain delivered by a given beam depends on direction and frequency, giving rise to beam squint \cite{phased_array_handbook}.

The aim of this work is to design the transmit and receive codebooks $\precbbar$ and $\comcbbar$ such that they: (i) deliver high beamforming gain on the downlink and uplink across the coverage regions and across all subcarriers $k \in [1,2,\dots,K]$ and (ii) couple low self-interference $\inrrx[k]$ across all subcarriers $k \in [1,2,\dots,K]$ to enable full-duplex operation.
Satisfying these two goals will ensure that any transmit and receive beams from within those codebooks can be reliably used to serve downlink and uplink users in a full-duplex fashion.
To formalize these two aims, on subcarrier $k$ at frequency $f_k$, let the collection of transmit array responses evaluated across the downlink coverage region be denoted by $\mAtx[k] \in \setmatrixcomplex{\Nt}{\Mtx}$, and let $\mArx[k] \in \setmatrixcomplex{\Nr}{\Mrx}$ be defined analogously.
\begin{align}
    \mAtx[k] &= 
    \begin{bmatrix}
    \vatx\parens{\thetatx\idx{1}, f_k} 
    & \cdots &
    \vatx\parens{\thetatx\idx{\Mtx},f_k}
    \end{bmatrix} \label{eq:Atx}
    \\
    \mArx[k] &= 
    \begin{bmatrix}
    \varx\parens{\thetarx\idx{1}, f_k} 
    & \cdots &
    \varx\parens{\thetarx\idx{\Mrx},f_k}
    \end{bmatrix} \label{eq:Arx}
\end{align}
Let us populate the codebooks into matrix form as follows.
\begin{align}
    \mF &= [\vf_1 \quad \vf_2 \quad \cdots \quad \vf_{\Mtx}] \in \setmatrixcomplex{\Nt}{\Mtx} \\
    \mW &= [\vw_1 \quad \vw_2 \quad \cdots \quad \vw_{\Mrx}] \in \setmatrixcomplex{\Nr}{\Mrx}
\end{align}
Then, a constraint on the beamforming gain afforded by the codebooks across all $K$ subcarriers can be formulated as
\begin{align}
    \normtwo{\Nt \cdot \vone - \diag{\mAtx\ctrans[k]\mF}}^2 &\leq \sigmatxsq \, \Nt^2 \, \Mtx \ \forall \  k \label{eq:const-F-gain}\\
    \normtwo{\Nr \cdot \vone - \diag{\mArx\ctrans[k]\mW}}^2 &\leq \sigmarxsq \, \Nr^2 \, \Mrx \ \forall \ k. \label{eq:const-W-gain}
\end{align}
Here, $\sigmatxsq,\sigmarxsq \geq 0$ are design parameters which represent the tolerable variance in normalized gain offered by the codebooks across their respective coverage regions at each subcarrier $k$.
Increasing $\sigmatxsq,\sigmarxsq$ will lead to greater variance in gain and a degradation in coverage but, as we will see, will allow us to construct codebooks which are more robust to self-interference.
We thus term \eqref{eq:const-F-gain} and \eqref{eq:const-W-gain} \textit{coverage constraints}.

To unlock wideband full-duplex operation, self-interference should be reduced across all $K$ subcarriers. 
In this pursuit, we focus our attention on minimizing the average of $\inrrx[k]$ across all $K$ subcarriers and across all possible $\Mtx\Mrx$ transmit-receive beam pairs, expressed as
\begin{subequations}
\begin{align}
\inrrx^{\mathrm{avg}} 
&= \frac{1}{K \Mtx \Mrx} \sum_{k=1}^K \sum_{i=1}^{\Mtx} \sum_{j=1}^{\Mrx} \frac{\powertx^{\mathsf{BS}} G \bars{\vw_j^* \mfH[k] \vf_i}^2}{\powernoise^{\mathsf{BS}} \, \Nt \, \normtwo{\vw_j}^2} \\
&\approx \frac{\powertx^{\mathsf{BS}} \, G}{\powernoise^{\mathsf{BS}} \, \Nt \, \Nr \, K} \sum_{k=1}^K \frac{\normfro{\mW^* \mfH[k] \mF}^2}{\Mtx \, \Mrx}, \label{eq:inr-rx-avg}
\end{align}
\end{subequations}
where the approximation follows when $\normtwo{\vw_j}^2 \approx \Nr$ for all $j$. 
From this, it can be seen that minimizing $\sum_{k=1}^K \normfro{\mW^* \mfH[k] \mF}^2$ will approximately minimize the average self-interference across the band of interest.
Combining the coverage constraints \eqref{eq:const-F-gain} and \eqref{eq:const-W-gain} with the goal of minimizing self-interference, we assemble the following codebook design problem.
\begin{subequations} \label{eq:opt-1}
\begin{align}
\min_{\mF, \mW} \ & \sum_{k=1}^{K}\normfro{\mW^*\mfH[k]\mF}^2 \label{eq:opt-1-obj} \\
\subjectto 
&\normtwo{\Nt \cdot \vone - \diag{\mAtx\ctrans[k]\mF}}^2 \leq \sigmatxsq \, \Nt^2 \, \Mtx \, \forall \, k \label{eq:opt-1-F-gain} \\
&\normtwo{\Nr \cdot \vone - \diag{\mArx\ctrans[k]\mW}}^2 \leq \sigmarxsq \, \Nr^2 \, \Mrx \ \forall \ k \label{eq:opt-1-W-gain} \\
&[\mF]_{:,i} \in \mathcal{F} \ \forall \  i=1,\dots,\Mtx \label{eq:opt-1-F-feas} \\
&[\mW]_{:,j} \in \mathcal{W} \ \forall \ j=1,\dots,\Mrx \label{eq:opt-1-W-feas}
\end{align}
\end{subequations}
Here, constraints \eqref{eq:opt-1-F-feas} and \eqref{eq:opt-1-W-feas} account for finite-resolution phase and amplitude control of practical analog beamforming networks.
Inspecting \eqref{eq:opt-1} reveals the challenge of the problem at hand.
Frequency-flat beam codebooks $\mF$ and $\mW$ are tasked with jointly 
(i) reducing their coupling over a wideband, frequency-selective self-interference channel and 
(ii) delivering high gain across a wide bandwidth, subject to beam squint. 

The objective \eqref{eq:opt-1-obj}, being the sum of $K$ Frobenius norms, can lead to unnecessary computational complexity if used directly since the number of subcarriers $K$ can be large in wideband systems.
In light of this, we introduce the following theorem to simplify the objective into a single Frobenius norm.
\begin{theorem}\label{thm:rewrite-obj}
    For a given receive codebook $\mW$, the objective of \eqref{eq:opt-1-obj} can be written equivalently as
    \begin{align}
        \sum_{k=1}^{K}\normfro{\mW^*\mfH[k]\mF}^2 &= \normfro{\mLambda^{\frac{1}{2}}_{\mW} \mQ^*_{\mW} \mF}^2, \label{eq:equiv-obj}
    \end{align}
    where $\mQ_\mW \mLambda_\mW \mQ_{\mW}\ctrans$ is the eigenvalue decomposition of $ \sum_{k=1}^{K}\mfH^*[k] \mW \mW^* \mfH[k]$.
\end{theorem}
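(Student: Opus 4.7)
The plan is to reduce the sum of $K$ Frobenius norms in \eqref{eq:opt-1-obj} to a single quadratic form in $\mF$ via the trace identity, and then recognize the resulting kernel matrix through its eigenvalue decomposition. Because $\mW$ is held fixed in the theorem statement, the argument is purely algebraic.

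First, I would apply $\normfro{\mA}^2 = \trace{\mA\ctrans\mA}$ to each summand and use linearity together with the cyclic property of the trace to pull the summation into the middle, giving
\begin{align}
\sum_{k=1}^{K}\normfro{\mW\ctrans\mfH[k]\mF}^2 = \trace{\mF\ctrans \mM_{\mW} \mF},
\end{align}
where $\mM_{\mW} \defeq \sum_{k=1}^{K}\mfH\ctrans[k]\mW\mW\ctrans\mfH[k]$ is independent of $\mF$. This isolates all of the $\mfH[k]$-dependence into a single $\Nt \times \Nt$ Hermitian matrix that can be diagonalized once.

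The key observation is then that $\mM_{\mW}$ is Hermitian positive semidefinite: each summand has the form $\mX\ctrans\mX$ with $\mX = \mW\ctrans\mfH[k]$, and a sum of positive semidefinite matrices is positive semidefinite. Consequently its eigenvalue decomposition $\mM_{\mW} = \mQ_{\mW}\mLambda_{\mW}\mQ_{\mW}\ctrans$ has a unitary $\mQ_{\mW}$ and a diagonal $\mLambda_{\mW}$ with nonnegative entries, so the principal square root $\mLambda_{\mW}^{1/2}$ is well-defined. Substituting this decomposition and writing $\mLambda_{\mW} = \mLambda_{\mW}^{1/2}\mLambda_{\mW}^{1/2}$, one more application of the cyclic trace identity turns $\trace{\mF\ctrans\mQ_{\mW}\mLambda_{\mW}\mQ_{\mW}\ctrans\mF}$ into $\normfro{\mLambda_{\mW}^{1/2}\mQ_{\mW}\ctrans\mF}^2$, which is precisely \eqref{eq:equiv-obj}.

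I do not anticipate a genuine obstacle in this argument — the only step requiring care is verifying the positive semidefiniteness of $\mM_{\mW}$, since otherwise the square root appearing in the claimed form would be ill-defined. The value of the result lies less in its technical depth and more in its practical payoff: it collapses a $K$-term sum of norms over $\Nr \times \Mtx$ matrices into a single norm of an $\Nt \times \Mtx$ matrix, which will significantly streamline the convex design of $\mF$ that follows.
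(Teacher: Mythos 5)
Your proposal is correct and follows essentially the same route as the paper's proof: Frobenius norm as trace, linearity and cyclicity of trace to form $\sum_k \mfH\ctrans[k]\mW\mW\ctrans\mfH[k]$, eigenvalue decomposition, splitting $\mLambda_{\mW}$ into its square roots, and converting back to a Frobenius norm. Your explicit justification that this matrix is Hermitian positive semidefinite (so the eigenvalues are nonnegative and $\mLambda_{\mW}^{1/2}$ is well-defined) is a welcome bit of extra care that the paper's step (c) only asserts in passing.
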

\begin{proof}
This equivalency can be shown as
\begin{subequations}
    \begin{align}
    \sum_{k=1}^{K}\normfro{\mW^*\mfH[k]\mF}^2 
    &\eqa \sum_{k=1}^K\trace{\mF^* \mfH^*[k] \mW \mW^* \mfH[k] \mF} \\
    &\eqb \trace{\mF^* \mQ_{\mW} \mLambda_{\mW} \mQ^*_{\mW} \mF} \\
    &\eqc \trace{\mF^* \mQ_{\mW} \mLambda^{* \frac{1}{2}}_{\mW} \mLambda^{\frac{1}{2}}_{\mW} \mQ^*_{\mW} \mF} \\
    &\eqd \normfro{\mLambda^{\frac{1}{2}}_{\mW} \mQ^*_{\mW} \mF}^2,
\end{align}
\end{subequations}
where 
(a) follows from the definition of the Frobenius norm; 
(b) is by the linearity of trace and by taking the eigenvalue decomposition $\sum_{k=1}^{K}\mfH^*[k] \mW \mW^* \mfH[k] = \mQ_{\mW} \mLambda_{\mW} \mQ^*_{\mW}$;
(c) follows from all eigenvalues being positive; 
and (d) follows from the definition of the Frobenius norm.
\end{proof}

There a few noteworthy challenges associated with solving problem \eqref{eq:opt-1}.
First, the non-convexity of $\sF$ and $\sW$ in constraints \eqref{eq:opt-1-F-feas} and \eqref{eq:opt-1-W-feas} introduces difficulty in optimization.
To address this, we will relax these constraints to continuous ones and then use projection to ensure the designed codebooks are physically realizable, satisfying \eqref{eq:opt-1-F-feas} and \eqref{eq:opt-1-W-feas}.
The second challenge lies in jointly optimizing $\mF$ and $\mW$, which we will overcome by optimizing them in an alternating manner.
We walk through our approach in the following, first defining our projection rule.

\begin{theorem}
Let $\project{\sF}{\vf}$ denote the projection of a beamforming vector $\vf$ onto the set physically realizable beams $\sF$, defined as $\project{\sF}{\vf} = \argmin_{\vx\in\sF} \, \normtwo{\vf - \vx}^2$.
Then, the $m$-th projected weight can be expressed as
\begin{align}
    \entry{\project{\sF}{\vf}}{m} &= A\opt \cdot \expp{\j\theta\opt},
\end{align}
where the projected phase and amplitude are respectively
\begin{align}
\theta\opt 
&= \argmin_{\theta_i \in \sP} \ |\theta_i - \theta|,
\label{opt_theta} \\
A\opt &= \argmin_{A_i \in \sA} \ \bars{A_i - A \cdot \cosp{\theta_i - \theta\opt}}, \label{opt_A}
\end{align}
where the $m$-th entry of $\vf$ is $\entry{\vf}{m} =  A \cdot \expp{\j\theta}$. 
\begin{proof}
The proof is omitted due to space constraints.
\end{proof}
\end{theorem}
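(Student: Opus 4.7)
The plan is to exploit the separability of the projection across antenna entries and then resolve each resulting scalar sub-problem by optimizing over the phase first and the amplitude second.

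First, I would observe that both the objective $\normtwo{\vf - \vx}^2 = \sum_{m=1}^{\Nt} \bars{\entry{\vf}{m} - \entry{\vx}{m}}^2$ and the feasible set $\sF$ decompose across entries, since $\sF$ imposes independent amplitude and phase constraints of the form $\entry{\vx}{m} = A_m \epow{\j \theta_m}$ with $(A_m, \theta_m) \in \sA \times \sP$ on each component. The projection therefore reduces to $\Nt$ decoupled scalar sub-problems: given $\entry{\vf}{m} = A \epow{\j \theta}$, minimize $\bars{A \epow{\j \theta} - A' \epow{\j \theta'}}^2$ over $(A', \theta') \in \sA \times \sP$.

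Second, I would expand the scalar cost as $A^2 + (A')^2 - 2 A A' \cosp{\theta - \theta'}$. For any fixed $A' > 0$, minimizing over $\theta'$ amounts to maximizing $\cosp{\theta - \theta'}$, an $A'$-independent sub-problem whose solution is the element of $\sP$ lying closest to $\theta$ on the unit circle, yielding $\theta\opt = \argmin_{\theta_i \in \sP} \bars{\theta_i - \theta}$.

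Third, I would substitute $\theta\opt$ back into the cost and complete the square in $A'$, rewriting it as $(A' - A \cosp{\theta\opt - \theta})^2 + A^2 \sin^2\parens{\theta\opt - \theta}$. Since the second term is independent of $A'$, the optimal amplitude is the element of $\sA$ nearest to $A \cosp{\theta\opt - \theta}$, which matches the claimed $A\opt$. Stacking the per-entry minimizers then yields $\entry{\project{\sF}{\vf}}{m} = A\opt \epow{\j \theta\opt}$.

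I do not expect any serious obstacle in executing this plan, as the argument is essentially a two-step application of alternating minimization on a single complex scalar. The only subtleties worth addressing are (i) the degenerate case $A = 0$, in which any $\theta'$ is optimal and the stated formula for $\theta\opt$ remains valid though non-unique, and (ii) interpreting $\bars{\theta_i - \theta}$ as arc-length on the circle so that the ``closest'' element of $\sP$ is well-defined modulo $2\pi$.
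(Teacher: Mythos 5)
Your proof is correct and is essentially the argument the omitted proof must contain: the objective and the feasible set $\sF$ both decouple across antenna entries, the phase subproblem $\max_{\theta'\in\sP}\cosp{\theta-\theta'}$ is independent of the amplitude, and completing the square identifies the optimal amplitude as the nearest element of $\sA$ to $A\cosp{\theta-\theta\opt}$. Your two caveats (the wrap-around interpretation of $|\theta_i-\theta|$ and the degenerate case $A=0$) are exactly the right ones to flag; note also that your derivation shows the cosine in \eqref{opt_A} should read $\cosp{\theta-\theta\opt}$ rather than $\cosp{\theta_i-\theta\opt}$, since $\theta_i$ is only a dummy variable in \eqref{opt_theta} --- your version is the correct one.
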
%
To avoid the complexity associated with jointly optimizing both $\mF$ and $\mW$, we divide solving problem \eqref{eq:opt-1} into two parts. 
We first initialize $\mW$ and solve for $\mF$ and then solve for $\mW$, fixing $\mF$.
This alternating minimization begins by initializing the receive codebook as $\mW \leftarrow \project{\sW}{\mArx[k_0]}$, where $k_0$ is the subcarrier index of the carrier frequency $\fc$ and the projection is applied column-wise to $\mArx[k_0]$.
Then, upon fixing $\mW$, we formulate the following problem to solve for $\mF$, employing the equivalency \eqref{eq:equiv-obj} of \thmref{thm:rewrite-obj} in writing the objective.
\begin{subequations} \label{eq:opt-2}
\begin{align}
\min_{\mF} \
&\normfro{\mLambda^{\frac{1}{2}}_{\mW} \mQ^*_{\mW} \mF}^2 \label{eq:opt-2-obj} \\
\subjectto
&\normtwo{\Nt \cdot \vone - \diag{\mA\ctrans_{\mathrm{tx}}[k]\mF}}^2 \leq \sigmatxsq \, \Nt^2 \, \Mtx \ \forall \  k \label{eq:opt-2-F-gain} \\
&\bars{[\mF]_{i,j}} \leq 1 \ \forall \ i =1,\dots,\Nt, \ j=1,\dots,\Mtx \label{eq:opt-2-F-feas}
\end{align}
\end{subequations}
Here, we have relaxed the non-convex constraint of \eqref{eq:opt-1-F-feas} to the convex constraint \eqref{eq:opt-2-F-feas} since each weight must not exceed unit magnitude, and thus \eqref{eq:opt-2} is a convex problem and can be readily solved to obtain a transmit codebook $\mF\opt$ using convex optimization solvers \cite{cvx}. 
To ensure the final transmit codebook $\mF$ is physically realizable, we project the solution $\mF\opt$ using our projection rule $\mF \leftarrow \project{\sF}{\mF\opt}$.
Having designed $\mF$, we then fix it and solve for the received codebook $\mW$ in a similar manner in the following convex problem.
\begin{subequations} \label{eq:opt-3}
\begin{align}
\min_{\mW} \
&\normfro{\mLambda^{\frac{1}{2}}_{\mF} \mQ^*_{\mF} \mW}^2 \label{eq:opt-3-obj} \\
\subjectto 
&\normtwo{\Nr \cdot \vone - \diag{\mA\ctrans_{\mathrm{rx}}[k]\mW}}^2 \leq \sigmarxsq \, \Nr^2 \, \Mrx \ \forall \ k \label{eq:opt-3-W-gain} \\
&\bars{[\mW]_{i,j}} \leq 1 \ \forall \ i =1,\dots,\Nr, \ j=1,\dots,\Mrx \label{eq:opt-3-W-feas}
\end{align}
\end{subequations}
Here, $\mLambda_{\mF}$ and $\mQ_{\mF}$ are found analogous to those in \thmref{thm:rewrite-obj}.
Upon solving problem \eqref{eq:opt-3} for $\mW\opt$, a physically realizable receive codebook can be obtained by projecting it as $\mW \leftarrow \project{\sW}{\mW\opt}$.
This concludes our design of analog beamforming codebooks for wideband full-duplex \mmwave systems.

\section{Numerical Results} \label{sec:numerical-results}

To assess our proposed wideband codebook design, we simulated the full-duplex system illustrated in \figref{fig:system} at a carrier frequency of 60 GHz across bandwidths of at most 6 GHz. 
The \bs is equipped with two 8$\times$8 half-wavelength uniform planar arrays, whose centers are separated horizontally by 10 carrier wavelengths (about 5 cm).
For analog beamforming, the phase shifters have a resolution of $\bitsphase = 6$ bits, with uniform control across $[0,2\pi)$, and the stepped attenuators also have a resolution of $\bitsamp = 6$ bits and a step size of $0.5$ dB per least significant bit.
The downlink and uplink coverage areas span from $-60^\circ$ to $60^\circ$ in azimuth and from $-30^\circ$ to $30^\circ$ in elevation, with $15^\circ$ spacing in each.
This results in codebooks of $45$ beams for both transmission and reception.
Users are uniformly distributed from $-67.5^\circ$ to $67.5^\circ$ in azimuth and from $-37.5^\circ$ to $37.5^\circ$ in elevation from the \bs, with their channels assumed to be line-of-sight for simplicity.
In line with conventional beam alignment, the transmit and receive beams are selected from their respective codebooks to maximize \gsnr.
The self-interference channel $\braces{\mfH[k]}_{k=1}^K$ was simulated using MATLAB's Antenna Toolbox.
On the downlink and uplink, we set $\snrtxbar=\snrrxbar=10$~dB.
We set $\inrrxbar=80$~dB and ignore cross-link interference.
We compare our proposed design against three codebook designs:
\begin{itemize}
    \item \textbf{Conjugate beamforming (CBF):} 
    Each transmit and receive beam delivers maximum gain in its intended direction at the center frequency. Both self-interference and beam squint are ignored.
    \item \textbf{\lonestar \cite{roberts_lonestar}:} Transmit and receive beams are designed to couple low self-interference and provide high gain in their intended directions at the center frequency. Wideband self-interference and beam squint are ignored.
    \item \textbf{\lonestar-WB:} We replace the narrowband objective of \lonestar \cite{roberts_lonestar} with the wideband objective \eqref{eq:opt-1-obj} proposed in this work. The coverage constraints remain narrowband, ignoring beam squint.
\end{itemize}
Note that \lonestar-WB and the proposed design share the same objective but different constraints; the proposed design ensures a maximum coverage variance across all $K$ subcarriers whereas \lonestar-WB only does so at the center frequency.
For the sake of assessing our proposed scheme, we define the \textit{codebook capacity} as the full-duplex capacity with CBF codebooks if $\inrrxbar=-\infty$ dB.

Let us first consider \figref{fig:inrrx}, depicting the average $\inrrx$ as a function of frequency for CBF, \lonestar \cite{roberts_lonestar}, and our proposed codebooks for various bandwidths $B$. 
Recall that $\inrrx=0$~dB corresponds to self-interference as strong as noise.
For the proposed designs, the coverage variance design parameters $\sigmatxsq=\sigmarxsq$ have been tuned to maximize sum spectral efficiency across the entire bandwidth of interest.
For \lonestar \cite{roberts_lonestar} and \lonestar-WB, $\sigmatxsq=\sigmarxsq$ have been tuned to maximize the sum spectral efficiency across a bandwidth of 6 GHz. 
First, it can be seen that the CBF codebooks couple prohibitively high self-interference across the entire 6 GHz~band.
This is not surprising as their design does not account for self-interference. 
With \lonestar \cite{roberts_lonestar} codebooks, self-interference is markedly reduced to around $15$ dB below noise at the center frequency of 60~GHz, but is well above noise at the edges of the band. 
This confirms that the self-interference channel $\braces{\mfH[k]}$ exhibits significant frequency-selectivity and suggests that narrowband designs are only suitable for bandwidths less than 2~GHz. 
With \lonestar-WB, $\inrrx$ is generally lower than \lonestar \cite{roberts_lonestar}, except around the center frequency, but also exhibits high self-interference near edges of the band.
Our proposed codebooks, on the other hand, reliably reduce self-interference to below noise across the entire band of interest.
It is interesting to note that, while \lonestar-WB is less constrained in general than the proposed design, it exhibits higher $\inrrx$ for the case of $B=6$~GHz than the proposed design.
This is because, as we will see, \lonestar-WB tends to sacrifice coverage across frequency, whereas the proposed design maintains coverage. 
As a result, \lonestar-WB cannot generally attain both low $\inrrx$ and high $\snrtx$ and $\snrrx$, and thus must prioritize \gls{snr} to maximize sum spectral efficiency.

\begin{figure}[t]
    \centering%
    \input{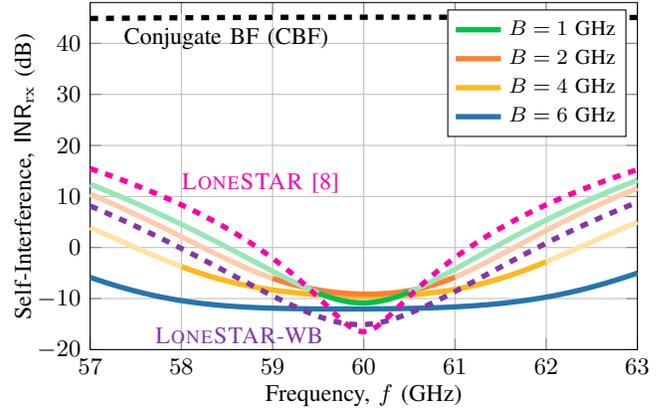}%
    \vspace{-0.25cm}%
    \caption{Average self-interference across frequency for CBF, \lonestar \cite{roberts_lonestar}, \lonestar-WB, and the proposed codebooks for various bandwidths $B$.}
    \label{fig:inrrx}
    \vspace{-0.25cm}
\end{figure}

Having confirmed that our proposed codebooks are indeed capable of canceling wideband self-interference, we now assess the sum spectral efficiency $R=\frac{1}{K}\sum_{k=1}^K R[k]$ they attain in \figref{fig:R}. 
First, we can see that CBF codebooks offer poor spectral efficiency due to the high self-interference they couple.
For \lonestar \cite{roberts_lonestar}, we consider two cases: (i) the dashed line corresponds to fixing $\sigmatxsq=\sigmarxsq$ to that which maximizes the narrowband sum spectral efficiency at the center frequency; and (ii) the solid line corresponds to tuning $\sigmatxsq=\sigmarxsq$ to maximize the sum spectral efficiency across the bandwidth~$B$.
We also include performance for \lonestar-WB, which is also tuned to maximize sum spectral efficiency across $B$.
All three offer satisfactory performance across narrow bandwidths less than 2 GHz, but as the bandwidth is widened, \lonestar \cite{roberts_lonestar} begins to degrade substantially if not tuned to $B$. 
Upon tuning \lonestar, its performance falls just shy of \lonestar-WB, which itself falls short of the proposed design, especially for large bandwidths. 
From this, we can  importantly conclude that \lonestar and \lonestar-WB cannot attain the performance of our proposed design by adjusting $\sigmatxsq=\sigmarxsq$. 
Merely changing \lonestar's objective from narrowband to wideband proves insufficient.
Rather, the objective \textit{and} the coverage constraints should both be formulated in a wideband sense to account for frequency-selectivity and beam squint.
Doing so allows our proposed design to better maintain performance across wide bandwidths, sacrificing only about 10\% of its sum spectral efficiency when the bandwidth increases from 100~MHz to 6~GHz.

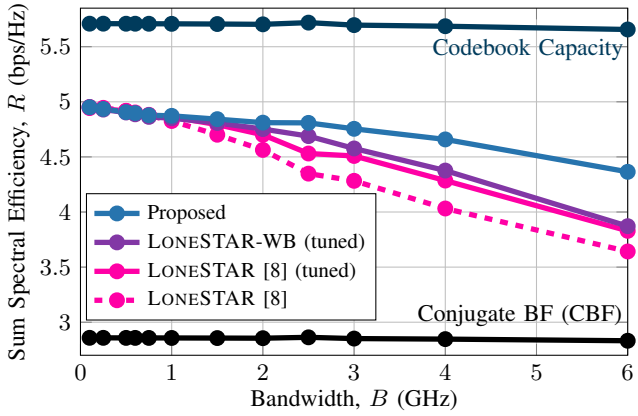
\begin{figure}[t]
    \centering%
    % This file was created by matlab2tikz.
%
%The latest updates can be retrieved from
%  http://www.mathworks.com/matlabcentral/fileexchange/22022-matlab2tikz-matlab2tikz
%where you can also make suggestions and rate matlab2tikz.
%
\definecolor{mycolor1}{rgb}{1.00000,0.49412,0.18824}%
\definecolor{mycolor2}{rgb}{0.15294,0.45490,0.68235}%
\begin{tikzpicture}

\begin{axis}[%
width=\linewidth,
height=0.7\linewidth,
% at={(0.744in,0.495in)},
% scale only axis,
xmin=0,
xmax=6,
xlabel style={font=\color{white!0!black}\small},
xlabel={Bandwidth, $B$ (GHz)},
ymin=2.7,
ymax=5.85,
ticklabel style={font=\small},
ylabel style={font=\color{white!0!black}\small},
ylabel={Sum Spectral Efficiency, $R$ (bps/Hz)},
axis background/.style={fill=white},
xmajorgrids,
ymajorgrids,
xtick distance=1,
ytick distance=0.5,
xlabel shift = -3pt,
ylabel shift = -3pt,
legend style={at={(0.01,0.1)}, anchor=south west, legend cell align=left, align=left, draw=white!0!black, legend reversed, font=\footnotesize, inner sep=2pt}
]

\addplot [color=black, line width=2.0pt, mark=*, mark options={solid, black}, forget plot]
  table[row sep=crcr]{%
0.1	2.85837543605935\\
0.25	2.85815932568005\\
0.5	2.85769219537124\\
0.6	2.85766664623853\\
0.75	2.85745159596178\\
1	2.85705387898502\\
1.5	2.85610231891351\\
2	2.85489195306657\\
2.5	2.86280172174239\\
3	2.85147576484338\\
4	2.84654517565208\\
6	2.8317482622804\\
} node[above=0.25mm,pos=0.8] {\small Conjugate BF (CBF)};
% \addlegendentry{Conjugate~BF (CBF)}

\addplot [color=uclamagenta, line width=2.0pt, dashed, mark=*, mark options={solid, uclamagenta}]
table[row sep=crcr]{%
    0.1	4.95324317139319\\
    0.25	4.94550502298169\\
    0.5	4.91800201520155\\
    0.6	4.9032783961111\\
    0.75	4.87670867095431\\
    1	4.82592272871524\\
    1.5	4.70128796058455\\
    2	4.56350933611388\\
    2.5	4.34829519545444\\
    3	4.28309385147094\\
    4	4.03190832647782\\
    6	3.64194011484685\\
};
\addlegendentry{\textsc{LoneSTAR} \cite{roberts_lonestar}}

\addplot [color=uclamagenta, line width=2.0pt, mark=*, mark options={solid, uclamagenta}]
table[row sep=crcr]{%
0.1	4.94296204517546\\
0.25	4.9359535376598\\
0.5	4.91801233887269\\
0.6	4.89374917737397\\
0.75	4.88209400684011\\
1	4.86253808460064\\
1.5	4.79294339317627\\
2	4.70073018899399\\
2.5	4.53082430370568\\
3	4.5086309987332\\
4	4.28542742293606\\
6	3.82902698099748\\
};
\addlegendentry{\textsc{LoneSTAR} \cite{roberts_lonestar} (tuned)}

\addplot [color=uclaviolet, line width=2.0pt, mark=*, mark options={solid, uclaviolet}]
table[row sep=crcr]{%
0.1	4.9524 \\
0.25	4.9295\\
0.5	4.9030\\
0.6	4.8867\\
0.75	4.8633\\
1	4.8512\\
1.5	4.8081\\
2	4.7528\\
2.5	4.6888\\
3	4.5767\\
4	4.3761\\
6	3.8714\\
};
\addlegendentry{\lonestar-WB (tuned)}

\addplot [color=uclablue, line width=2.0pt, mark=*, mark options={solid, uclablue}]
table[row sep=crcr]{%
    0.1	4.94978960557574\\
    0.25	4.93534365172467\\
    0.5	4.90648088296246\\
    0.6	4.89777788307566\\
    0.75	4.87849242221734\\
    1	4.87279895925223\\
    1.5	4.84264766392995\\
    2	4.8109092405668\\
    2.5	4.80890612555746\\
    3	4.75491131898717\\
    4	4.65882532337533\\
    6	4.36471701251282\\
};
\addlegendentry{Proposed}

\addplot [color=uclabluedarkest, line width=2.0pt, mark=*, mark options={solid, uclabluedarkest}, forget plot]
table[row sep=crcr]{%
    0.1	5.70903104067571\\
    0.25	5.7091566146789\\
    0.5	5.70867588023164\\
    0.6	5.70871668009014\\
    0.75	5.70841375229558\\
    1	5.70778099015369\\
    1.5	5.70593261028651\\
    2	5.70317393234311\\
    2.5	5.71858324679514\\
    3	5.69603619301197\\
    4	5.68595995974421\\
    6	5.65644615091503\\
} node[below=0mm,pos=0.815] {\small Codebook Capacity};
% \addlegendentry{Codebook Capacity}

\end{axis}
\end{tikzpicture}%%
    \vspace{-0.25cm}%
    \caption{Average sum spectral efficiency as a function of bandwidth $B$.}
    \label{fig:R}
    \vspace{-0.25cm}
\end{figure}

To better understand exactly why the proposed codebooks outperform \lonestar \cite{roberts_lonestar} and \lonestar-WB, let us consider \figref{fig:variance}, where we plot the downlink coverage variance achieved by each as a function of frequency $f$ for a bandwidth of $B=6$~GHz.
This downlink coverage variance is the left-hand side of \eqref{eq:opt-1-F-gain}, scaled down by $\Nt^2\Mtx$, defined as $\hat{\sigma}^2_{\mathrm{tx}}(f) \triangleq {\normtwo{\Nt \cdot \vone - \diag{\mAtx(f)\mF}}^2}  / {(\Nt^2 \, \Mtx)}$,  where $\mAtx(f)$ is defined analogous to \eqref{eq:Atx} for general $f$ instead of $f_k$.
The downlink coverage variance $\hat{\sigma}^2_{\mathrm{tx}}(f)$ serves as a measure of the variability in the gain provided by the transmit codebook in its intended directions across frequency.
The solid lines of \figref{fig:variance} depict the coverage variance of each codebook design with $\sigmatxsq=\sigmarxsq$ tuned to maximize the sum spectral efficiency.
\lonestar and \lonestar-WB maximize spectral efficiency across a 6~GHz bandwidth when $\sigmatxsq=\sigmarxsq=-14.1$~dB.
The proposed design maximizes spectral efficiency when $\sigmatxsq=\sigmarxsq=-8.5$~dB.
The dashed lines depict the coverage variance when $\sigmatxsq=\sigmarxsq$ is chosen to align the coverage variances at the center frequency.
In achieving a given coverage variance at the center frequency, \lonestar and \lonestar-WB exhibit more frequency-selectivity than the proposed design.
This is because \lonestar and \lonestar-WB neglect beam squint by only constraining the coverage variance at the center frequency, whereas the proposed design attempts to maintain coverage across the entire bandwidth.
Recall that the proposed design constrains the \textit{worst-case} coverage variance across subcarriers, which occurs at the edges of the band.
Thus, in pursuit of reducing wideband self-interference, the proposed design is incentivized to increase the coverage variance of frequency components across the band to give it more flexibility in minimizing the objective. 
This yields a coverage variance which is more frequency-flat. 
In turn, the solutions produced by the proposed design are more suitable in maximizing spectral efficiency since they provide more consistent coverage than \lonestar and \lonestar-WB.
This confirms that creating effective codebooks for wideband full-duplex wireless systems involves accounting for both frequency-selective self-interference and beam squint.

\begin{figure}[t]
    \centering%
    \input{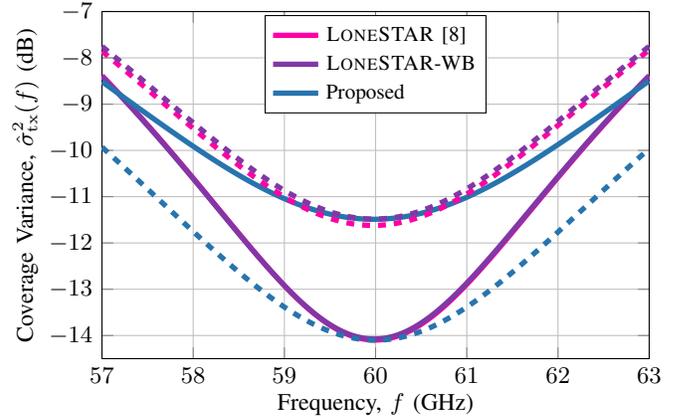}%
    \vspace{-0.25cm}%
    \caption{Downlink coverage variance $\hat{\sigma}^2_{\mathrm{tx}}(f)$ as a function of frequency. Our proposed design delivers higher spectral efficiency than \lonestar \cite{roberts_lonestar} and \lonestar-WB, since its codebooks exhibit more frequency-flat coverage.} 
    \label{fig:variance}
    \vspace{-0.25cm}
\end{figure}

\section{Conclusion} \label{sec:conclusion}

This paper has presented an analog beamforming codebook design for wideband full-duplex \mmwave systems.
Our proposed codebooks minimize self-interference across the entire band of interest and simultaneously deliver reliable coverage across that band.
The superior performance of our proposed design over existing narrowband codebooks for full-duplex \mmwave systems highlights the impacts of both beam squint and frequency-selective self-interference when operating over wide bandwidths.
Valuable future work may explore the applications of such codebooks in joint communication and sensing or the design of similar codebooks using machine learning.

\vspace{-0.1cm}

% \pagebreak

% \input{sec-appendix.tex}

% \pagebreak

% \input{sec-acknowledgement.tex}

% \pagebreak

\bibliographystyle{bibtex/IEEEtran}
\bibliography{bibtex/IEEEabrv,refs}

\end{document}